\newtheorem{theorem}{Theorem}
\newtheorem{lem}{Lemma}
\newtheorem{col}{Corollary}
\newcommand{\ignore}[1]{}
\begin{document}
%
% paper title
% can use linebreaks \\ within to get better formatting as desired
\title{ Handover Management in Dense Cellular Networks: A Stochastic Geometry Approach}

% author names and affiliations
% use a multiple column layout for up to three different
% affiliations
\author[Rabe Arshad]
       {Rabe Arshad$^*$, Hesham ElSawy$^{\dag}$, Sameh Sorour$^*$,
       Tareq Y. Al-Naffouri$^{\dag}$, and Mohamed-Slim Alouini$^{\dag}$
       \\ \\
       $^*$Electrical Engineering Department, King Fahd University of Petroleum and Minerals (KFUPM), Saudi Arabia\\
       Emails: \{g201408420, samehsorour\}@kfupm.edu.sa\\
       %\{redacted\}@email.edu\\
       $^{\dag}$CEMSE Division, EE program, King Abdullah University of Science and Technology (KAUST), Saudi Arabia\\
       Emails: $\{$hesham.elsawy, tareq.alnaffouri, slim.alouini$\}$@kaust.edu.sa}

\maketitle

\begin{abstract}
%\boldmath
Cellular operators are continuously densifying their networks to cope with the ever-increasing capacity demand. Furthermore, an extreme densification phase for cellular networks is foreseen to fulfill the ambitious fifth generation (5G) performance requirements. Network densification improves spectrum utilization and network capacity by shrinking base stations' (BSs) footprints and reusing the same spectrum more frequently over the spatial domain. However, network densification also increases the handover (HO) rate, which may diminish the capacity gains for mobile users due to HO delays. In highly dense 5G cellular networks, HO delays may neutralize or even negate the gains offered by network densification. In this paper, we present an analytical paradigm, based on stochastic geometry, to quantify the effect of HO delay on the average user rate in cellular networks. To this end, we propose a flexible handover scheme to reduce HO delay in case of highly dense cellular networks. This scheme allows skipping the HO procedure with some BSs along users' trajectories. The performance evaluation and testing of this scheme for only single HO skipping shows considerable gains in many practical scenarios.\\
\end{abstract}
\begin{keywords}
Dense Cellular Networks; Handover Management; Stochastic Geometry. \\
\end{keywords}

\IEEEpeerreviewmaketitle

\section{Introduction}
% no \IEEEPARstart
\lettrine{N}{       etwork} densification via base stations (BSs) deployment has always been a viable solution for cellular operators to cope with the increasing capacity demand. It is also expected that cellular operators will rely on network densification to fulfill a big chunk of the ambitious 5G requirements~\cite{1a},\cite{2a}. Network densification can be achieved by deploying different types of BSs (e.g., macro, micro, pico, and femto) according to the performance, time, and cost tradeoffs \cite{3a}. Deploying more BSs decreases the service area (also referred to as the footprint) of each BS, which improves the spatial spectral utilization and network capacity. However, the network densification gains come at the expense of increasing the handover rate for mobile users.\\

Handover (HO) is the procedure of changing the association of mobile users from one BS to another to maintain their connection to their best serving BS. According to the network objective (e.g., best signal strength, equal BSs' load, low service delay, etc), different association strategies may be defined to determine the best serving BS \cite{4a},\cite{5a}. In all cases, HO is employed to update users' association with mobility to satisfy the defined network objective along their path. The HO procedure involves signaling overhead between the mobile user, the serving BS, the target BS, and the core network elements, which consumes resources and introduces delay. Note that the HO rate increases with the intensity of BSs, which imposes higher delays and negatively affects the mobile user's throughput. In the case of highly dense cellular networks, the HO delay may become a performance limiting parameter that may neutralize or even negate the gains offered by network densification. Therefore, the HO delay should be carefully incorporated into the design of dense cellular networks to visualize and mitigate its effect. In particular, novel HO strategies are required to reduce the HO delay in order to harvest the foreseen network densification gains.\\

In order to draw rigorous conclusions on the HO effect and design efficient HO schemes, mathematical paradigms that incorporate the effect of HO delay into the network key performance indicators are required. In this regards, recent advances in modeling cellular networks using stochastic geometry can be exploited to develop such mathematical frameworks. Stochastic geometry has succeeded in providing a systematic analytical paradigm to model and design cellular networks \cite{5a,7a,8a,9a,10a}, see \cite{6a} for a survey. In stochastic geometry analysis, the network is abstracted to a convenient point process that maintains a balance between practicality and tractability. The Poisson point process (PPP) is well understood and widely accepted due to its tractability and simplicity. Stochastic geometry with BSs modeled as a PPP has enriched the literature with valuable results that enhanced our understanding of the cellular network behavior. For instance, the PPP assumption is reinforced by several experimental studies \cite{7a,8a,9a}. Coverage probability and ergodic capacity are studied in \cite{7a, 5a} for downlink cellular networks and in \cite{uplink_H} for uplink cellular networks. Symbol error probability and bit error probability for cellular networks are investigated in \cite{eid_app, Laila_Uplink, Laila_letter}. %The effect of generalized fading on outage and ergodic capacity is studied in \cite{Gil_marco, Rate_marco}.
Cellular networks with D2D communication is characterized in \cite{D2D_h, Jeffery_D2D}. %The performance of MIMO and network MIMO in cellular networks is characterized in \cite{STBC_harpret, eid_Mimo, CoMP_martin, CoMP_sakr}.
However, the effect of HO delay on the performance of dense cellular networks has been overlooked.\\

\subsection{Contribution}

In this paper, we study the effect of HO delay on the user average rate in dense cellular network environments. In particular, we develop a mathematical paradigm, based on stochastic geometry, that incorporates the HO delay into the rate analysis. It is worth noting that the signal-to-interference-plus-noise-ratio (SINR) dependent expressions are derived based on the stationary PPP analysis. However, we show by simulation that the stationary expressions capture the SINR performance of mobile users almost exactly. This implies that averaging over all users' trajectories in all network realizations is equivalent to averaging over all users locations in all network realizations. To this end, the results manifest the prominent negative effect of HO delay on the rate at high intensities of BSs.

Therefore, we propose a flexible HO scheme denoted as {\em HO skipping}, which allows users to skip associating with some of the BSs along their trajectories to reduce the HO delays. That is, a user can sacrifice being always best connected to decrease the HO delay and improve the long-term average rate. In this paper, we will consider a single HO skipping scheme in which a user, connected to its best serving BS, skips associating to only one subsequent best serving BS on its trajectory and must then reconnect to the following one. This pattern is repeated along the user's trajectory such that it alternates between connecting to its best serving BS and skipping the following one (cf. Fig.~\ref{1d} and Fig.~\ref{voron}).

For both the conventional HO and single HO skipping schemes, we derive expressions for the coverage probability and average throughput. The results show that, although BS skipping reduces the average coverage probability of mobile users, it improves their long-term average rate at high speeds and/or high network densities. The developed model can be used to decide the threshold at which BS skipping is beneficial and quantify the associated performance gains.

It is important to note that this work focuses on a single tier cellular network and a single HO skipping scheme. The case of multi-tier cellular networks and multiple HO skipping rules will be the subject of future work.
\begin{figure}
\centering
\includegraphics[width=0.9 \linewidth]{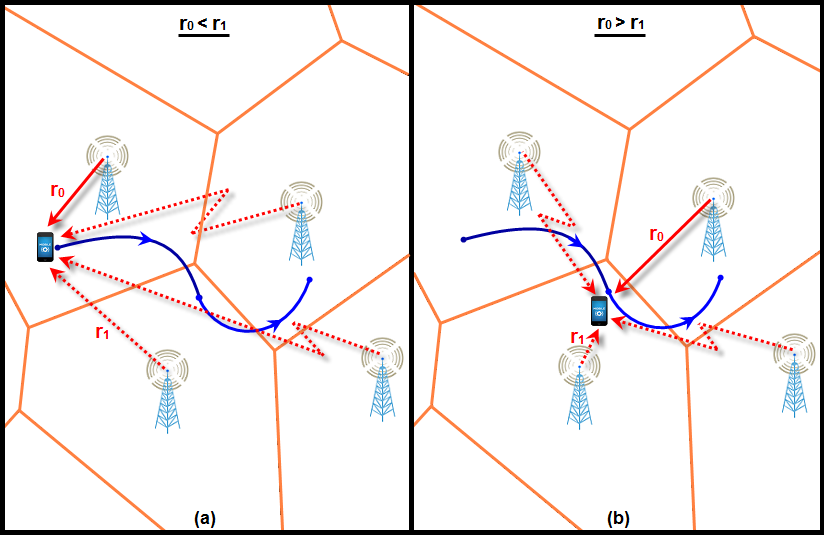}
\small \caption{(a) represents best connected case. (b) shows the blackout case. In both cases, $r_0$ represents the distance between test user and its serving BS, $r_1$ represents the distance between test user and its first interfering BS and blue solid line represents the user trajectory.}
\label{1d}
\end{figure}

%\hfill mds

%\hfill January 11, 2007

\section{System Model}
In this article, we consider a single-tier downlink cellular network in which the BSs' locations are abstracted via a two dimensional PPP $\Phi$ of intensity $\lambda$. All BSs are assumed to transmit with the same effective isotropic radiated power (EIRP) denoted by $P$. A general power-law path loss model with path loss exponent $\eta > 2$ is considered. In addition to the path-loss, transmitted signals experience multi-path fading. We assume a Rayleigh fading environment such that the channel power gains have independent exponential distributions. The users associate according to the well-known radio signal strength (RSS) association rules. In the depicted single-tier case, the best RSS association boils down to the nearest BS association strategy. The association regions for each BS can be visualized via the voronoi tessellation diagram shown in Fig. 2.

The analysis is conducted on a test mobile user\footnote{By Slivnyak’s theorem for PPP \cite{7a,19a}, there is no loss of generality to conduct the analysis for a test user} which is moving on an arbitrary trajectory with velocity $v$. We ignore HO failures due to blocking and focus on the coverage probability (defined as the probability that the SINR exceeds a certain threshold $T$) and the ergodic rate (defined by Shannon's capacity formula). That is, we assume that all BSs can assign a channel to the test user when this user passes within its coverage range. Furthermore, when the test user is assigned to a certain channel from a generic BS, we assume that all other BSs across the spatial domain are reusing the same channel.

Due to the HO procedure, a delay time $d$ is imposed on the test user, at which no useful data is transmitted to it. The delay $d$ represents the time spent in HO signaling between the serving BS, target BS, and core network elements. It is important to note that the delay $d$ may depend on the type of BSs. For instance, the HO delay is small in macro BSs, which are microwave or optically backhauled to the core network. In contrast, the delay $d$ may be more significant in the case of internet protocol (IP) backhauled small BSs (i.e., pico or femto) \cite{IP_backhaul}.

%Since, this model is based on single-tier cellular network so all HOs are considered to be horizontal with same delay $d$ i.e. time in which no user-usable data is transmitted. However, in the case of femto cells with huge intensity in dense urban region, HO delay is assumed to be more than that of conventional macro BSs considering over utilization of femto cells due to high traffic demand from excessive number of UEs in the vicinity.

In the {\em conventional} HO case, the test user maintains the best connectivity throughout its trajectory. Hence, when the test user passes by a certain voronoi cell, it connects to the BS in the center of that voronoi cell. This implies that $P r_{0}^{-\eta} > P r_{i}^{-\eta}$, $\forall i \neq 0$ is always satisfied, where $r_{0}$ and $r_{i}$ denote the distances from the test user to the serving and $i^{th}$ interfering BSs, respectively. We propose the {\em skipping} HO scheme, where the test user skip associating to one BS after every HO execution. This implies that $P r_{0}^{-\eta}$ > $P r_{i}^{-\eta}$, $\forall i \neq 0$ is satisfied for $50 \%$ of the time and $P r_{1}^{-\eta} > P r_{0}^{-\eta} > P r_{i}^{-\eta}$, $\forall i \notin \{0,1\}$ is satisfied for the rest of the time, where $r_1$ denotes the distance from the test user to the skipped BS when it passes within the voronoi cell of the skipped BS. Thus, in the skipping mode, the user alternates between the blackout and best connected state along its trajectory. Here, we denote the event at which the test user is located within the voronoi cell of a skipped BS as {\em blackout} event. Fig.~\ref{1d} shows the best connected and blackout events. Fig.~\ref{voron} depicts the conventional (green dotted line) and skipping (blue line) HO schemes for the red trajectory.

\begin{figure}[!t]
\centering
\includegraphics[width=0.85 \linewidth]{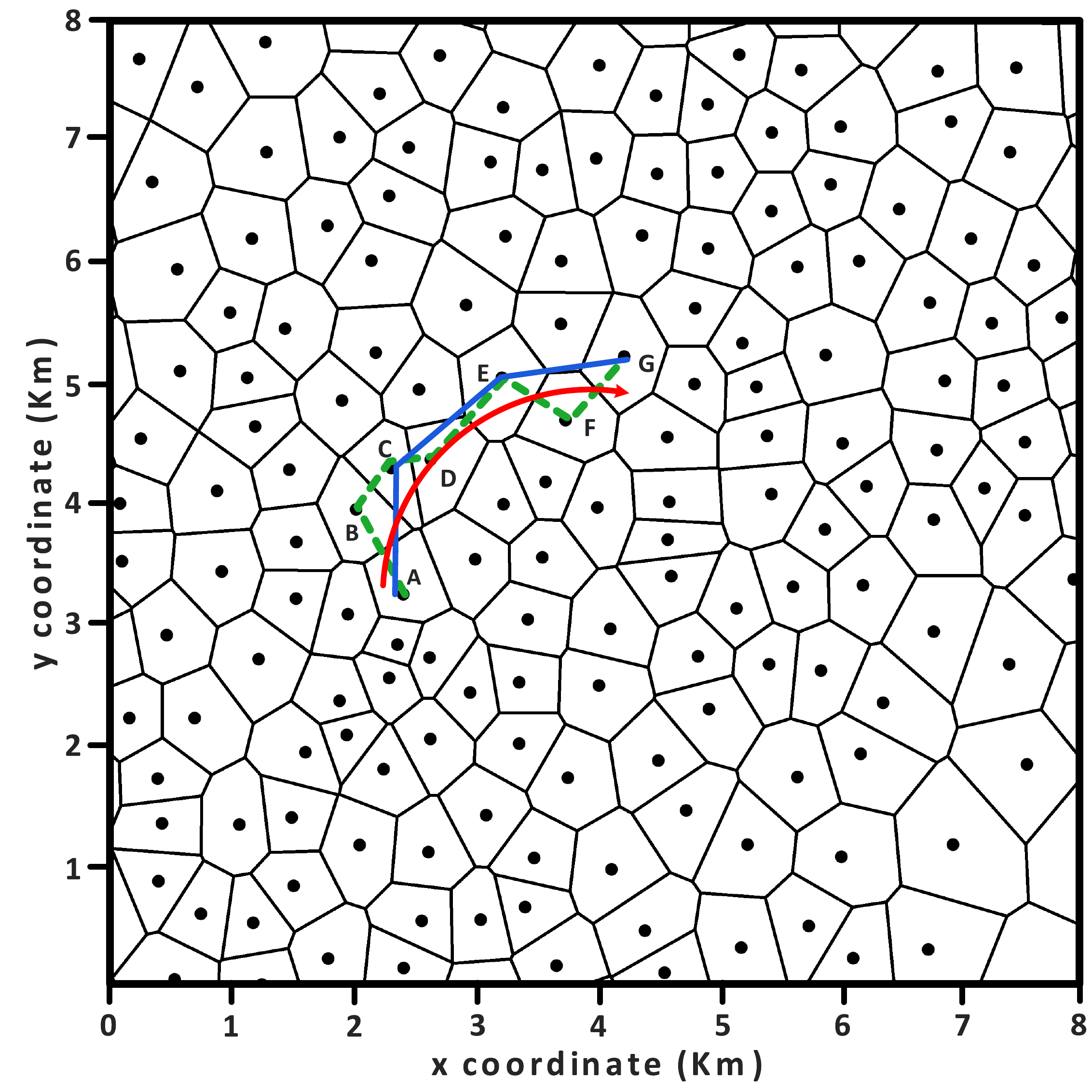}
\small \caption{Voronoi tessellation of actual cellular network (Macros only) in urban area with $\lambda$=3 BS per km\textsuperscript{2}. Red solid line depicts user trajectory while green dotted and blue solid lines show conventional and skipping HOs respectively. In the conventional (skipping) scheme the test user associates to the BSs $\{A, B, C, D, E, F, G\}$ $\left(\{A, C, E, G\}\right)$. }
%\small Fig. 1: Voronoi Tessellation of actual cellular network in urban area with $\lambda$=0.8 per km\textsuperscript{2}
\label{voron}
\end{figure}

\section{Service Distance Distribution}

The first step in the analysis is to characterize the service distance in the best connected and blackout cases. Note that the service distance is random due to the irregular network topology along with user mobility. It is important to characterize the service distance as it highly affects the SINR. Particularly, when the user is best connected, the RSS association creates an interference protection of radius $r_0$ around the user. In the blackout case, the user keeps its association with the same BS when it enters the voronoi cell of another BS. Hence, the skipped BS is closer to the user than its serving BS (i.e., $r_1$ < $r_0$). In other words, the skipped BS is located within the radius $r_0$ and every other interfering BS is located outside $r_0$. Note that, in the blackout case, $r_1$ and $r_0$ are correlated since $r_1 < r_0$. The distribution of serving BS in the best connected case and the joint distribution of the distances from the test user to the skipped and serving BS in the blackout case are characterized via the following lemma:

\begin{lem}
\label{lem:joint}
In a single tier cellular network with intensity $\lambda$, the distance distribution between a best connected user and its serving BS is given by:
\begin{align} \label{ser1}
 f^{(c)}_{r_{0}}( r ) &= 2\lambda \pi r e^{-\lambda\pi r^2},\quad 0 \leq r \leq \infty
\end{align}
and the joint distance distribution between a user in blackout and its serving and skipped BSs is given by:
\begin{align}
f^{(bk)}_{r_0,r_1}( x,y)
=  {4 (\pi \lambda)^{2}  x y e^{-\pi \lambda x^2}}; \quad 0 \leq y \leq x \leq \infty
\label{eq:joint}
\end{align}
\end{lem}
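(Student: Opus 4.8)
The plan is to derive both distributions from the standard void-probability (empty-space) argument for a PPP, treating the best-connected and blackout cases separately.

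For the best-connected distance in \eqref{ser1}, I would start from the observation that under nearest-BS association $r_0$ is simply the distance from the test user to the nearest point of $\Phi$. Hence $\mathbb{P}(r_0 > r)$ equals the probability that the disk $b(u,r)$ of radius $r$ centred at the test user $u$ contains no point of $\Phi$, which by the void probability of a PPP of intensity $\lambda$ is $e^{-\lambda \pi r^2}$. Differentiating the CDF $F_{r_0}^{(c)}(r) = 1 - e^{-\lambda \pi r^2}$ gives the claimed Rayleigh-type density. This step is essentially immediate; I would present it as a warm-up that also fixes notation for the harder case.

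For the joint density \eqref{eq:joint}, the key structural fact (already argued in the System Model and Service Distance sections) is that in a blackout the user sits in the Voronoi cell of the skipped BS at distance $y = r_1$, while its actual serving BS is the ``second-nearest'' BS at distance $x = r_0$, so $0 \le y \le x$, and crucially \emph{every other} BS lies outside the disk of radius $x$. I would compute the joint CDF-type quantity by conditioning on there being exactly the nearest point at distance in $[y, y+dy)$ and the second-nearest at distance in $[x, x+dx)$: the probability of a point in the infinitesimal annulus at radius $y$ is $\lambda \, 2\pi y \, dy$, likewise $\lambda \, 2\pi x \, dx$ for a point at radius $x$, and the disk of radius $x$ must otherwise be empty, contributing $e^{-\lambda \pi x^2}$ (the emptiness already subsumes the annulus regions to leading order). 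Multiplying, $f^{(bk)}_{r_0,r_1}(x,y)\,dx\,dy = (2\pi\lambda y\,dy)(2\pi\lambda x\,dx)\,e^{-\lambda\pi x^2}$, which yields $4(\pi\lambda)^2 x y\, e^{-\pi\lambda x^2}$ on the region $0 \le y \le x \le \infty$. A useful consistency check I would include: integrating out $y$ over $[0,x]$ gives $2\pi\lambda x \cdot \pi\lambda x^2 \cdot e^{-\pi\lambda x^2} $, the known density of the distance to the second-nearest point of a PPP, and integrating that over $x \ge 0$ returns $1$.

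The main obstacle is making the heuristic ``annulus $\times$ annulus $\times$ void'' argument rigorous, i.e.\ justifying that the events ``one point in $[y,y+dy)$'', ``one point in $[x,x+dx)$'', and ``no points in $b(u,x)$ minus those annuli'' factorize and that higher-order terms (two or more points in an annulus) are negligible. This is handled cleanly by invoking the independence of the PPP over disjoint regions together with Slivnyak's theorem (already cited in the excerpt for conditioning on the test user), or equivalently by writing the ordered distances $r_{(1)} < r_{(2)} < \cdots$ of a PPP and using the well-known fact that $\pi\lambda r_{(k)}^2$ are the arrival times of a unit-rate Poisson process, from which the joint density of $(r_{(1)}, r_{(2)})$ follows directly and matches \eqref{eq:joint} after the change of variables $x = r_{(2)}$, $y = r_{(1)}$. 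I would state this explicitly rather than belabour the differentials.
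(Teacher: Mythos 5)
Your proposal is correct and rests on essentially the same argument as the paper: the void probability of the disk for the best-connected distance, and the independence of the PPP over disjoint regions (nearest point at $r_1$, second-nearest at $r_0$, empty disk of radius $r_0$ otherwise) for the blackout case. The only difference is packaging — the paper writes the conditional CDF $\mathbb{P}\{r_0<x\,|\,r_1\}=1-e^{-\pi\lambda(x^2-r_1^2)}$, differentiates, and multiplies by the marginal density of $r_1$, whereas you compute the joint density directly via the annulus-times-void argument (or the mapping of $\pi\lambda r_{(k)}^2$ to unit-rate Poisson arrival times); your consistency check correctly recovers the second-nearest-distance marginal of Corollary~\ref{col_dist}.
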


\begin{proof}
The PDF $f^{(c)}_{r_{0}}(. )$ is obtained from the null probability of PPP as in \cite{7a}. The joint PDF $f^{(bk)}_{r_0,r_1}( .,.)$ is obtained by writing the conditional CDF of $r_0$ given $r_1$ as $ \mathbb{P}\left\{ r_0 < x | r_1\right\}
=  {1 - e^{-\pi \lambda (x^2 -r_1^2)}}$. Then, differentiating the conditional CDF of $r_0$ with respect to $x$ and multiplying by the marginal PDF\footnote{Since the user is in blackout, $r_1$ is the distance to the closest BS and $r_0$ is the distance to the second nearest BS. Hence, the marginal PDF of $r_1$ is given by $f_{r_{1}}(y) = 2\lambda \pi y e^{-\lambda\pi y^2}$ } of $r_1$, the joint PDF in \eqref{eq:joint} is obtained.
\end{proof}
The marginal and conditional service distance distributions for the blackout case are characterized by the following corollary:
\begin{col} \label{col_dist}
The marginal PDF of the distance between the test user and its serving BS in the blackout case is given by:
\begin{align}
\hspace{0.3cm}
f^{(bk)}_{r_{0}}(r) =2(\lambda\pi)^2r^3e^{-\lambda\pi r^2},\quad 0 \leq r \leq\infty
\label{ser_dist}
\end{align}
where $r_0$ represents the distance between the test user and its serving BS, which is the second nearest BS in the blackout case.\\
The conditional (i.e., conditioning on $r_0$) PDF of the distance between the test user and the skipped BS in the blackout case is given by:
\begin{align}
\hspace{0.3cm}
f^{(bk)}_{r_{1}}( r|r_0 )=\frac{2r}{r_{0}^{2}},\quad 0 \leq r \leq r_0 \leq \infty
\end{align}
\end{col}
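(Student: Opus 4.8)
The plan is to derive both distributions directly from the joint PDF in \eqref{eq:joint} by the standard operations of marginalization and conditioning, so no new probabilistic input is needed beyond Lemma~\ref{lem:joint}.

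\textbf{Marginal of $r_0$.} First I would integrate the joint density $f^{(bk)}_{r_0,r_1}(x,y) = 4(\pi\lambda)^2 x y e^{-\pi\lambda x^2}$ over the admissible range of the skipped-BS distance, namely $0 \le y \le x$. Since the only $y$-dependence is the factor $y$, this reduces to $\int_0^{x} y\, dy = x^2/2$, giving $f^{(bk)}_{r_0}(x) = 4(\pi\lambda)^2 x e^{-\pi\lambda x^2}\cdot \tfrac{x^2}{2} = 2(\pi\lambda)^2 x^3 e^{-\pi\lambda x^2}$, which is \eqref{ser_dist} after renaming $x \to r$. A one-line sanity check that this integrates to $1$ over $[0,\infty)$ (substitute $u = \pi\lambda r^2$) confirms normalization. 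It is worth remarking here that this marginal coincides with the contact distance to the \emph{second} nearest point of the PPP, consistent with the footnote in Lemma~\ref{lem:joint}.

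\textbf{Conditional of $r_1$ given $r_0$.} Next I would form the conditional density as $f^{(bk)}_{r_1}(y \mid r_0 = x) = f^{(bk)}_{r_0,r_1}(x,y) / f^{(bk)}_{r_0}(x)$. Dividing the joint density by the marginal just obtained, the common factors $4(\pi\lambda)^2$, $x$, and $e^{-\pi\lambda x^2}$ cancel (up to the constant $2$ coming from the $x^2/2$), leaving $f^{(bk)}_{r_1}(y \mid x) = \dfrac{4(\pi\lambda)^2 x y e^{-\pi\lambda x^2}}{2(\pi\lambda)^2 x^3 e^{-\pi\lambda x^2}} = \dfrac{2y}{x^2}$ on $0 \le y \le x$, which is the claimed expression. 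I would note that this says the skipped BS is uniformly distributed \emph{in area} within the disk of radius $r_0$ centered at the user — exactly the statement that, conditioned on the second-nearest point being at distance $r_0$, the nearest point is uniform on that disk, which is a familiar property of the PPP and serves as an independent consistency check.

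There is essentially no hard step here: the result is a routine corollary of Lemma~\ref{lem:joint}, and the only things to be careful about are the limits of integration (the triangular support $0 \le y \le x$ rather than a rectangle) and keeping track of the constants when cancelling in the conditional. The mild subtlety worth flagging explicitly is why the support of the conditional is $[0, r_0]$ and not all of $[0,\infty)$ — this is inherited from the blackout constraint $r_1 < r_0$ built into \eqref{eq:joint}, and I would state it in one sentence rather than re-deriving it.
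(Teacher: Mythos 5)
Your proposal is correct and follows exactly the paper's own route: integrate the joint PDF of Lemma~\ref{lem:joint} over $y\in[0,x]$ to get the marginal \eqref{ser_dist}, then divide the joint by that marginal to obtain the conditional $2r/r_0^2$ on $[0,r_0]$. The added normalization check and the uniform-in-the-disk interpretation are harmless extras; no gaps.
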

\begin{proof}
The marginal PDF of $r_0$ is obtained by integrating \eqref{eq:joint} with respect to (w.r.t.) $y$ from $0$ to $r_0$ while the conditional PDF of $r_1$ is obtained by dividing \eqref{eq:joint} by the marginal distribution of $r_0$, which is given in \eqref{ser_dist}.
\end{proof}

\section{Coverage Probability}

The coverage probability is defined as the probability that the test user can achieve a specified SINR threshold $T$. For the best connected case, the coverage probability is given by

\begin{eqnarray*}
\hspace{0.4cm} \mathcal{C}_c &=& \mathbb{P} \left\{ \frac{P h_0 r_{0}^{-\eta}}{ \sum_{i\epsilon \Phi \backslash b_0}{}P h_{i}r_{i}^{-\eta} + \sigma^2} \geq T \right\}
\end{eqnarray*}
where $b_0$ is the serving BS. In the blackout case, the coverage probability is given by
\begin{eqnarray*}
\hspace{0.4cm} \mathcal{C}_{bk} &=& \mathbb{P} \left\{ \frac{P h_0 r_{0}^{-\eta}}{P h_1 r_{1}^{-\eta} + \sum_{i\epsilon \Phi \backslash b_0 \text{,}b_1}{}P h_{i}r_{i}^{-\eta} + \sigma^2} \geq T  \right\}
\end{eqnarray*}
where $b_0$ and $b_1$ are the serving and the skipped BSs, respectively. Following \cite{6a}, conditioning on $r_0$ and exploiting the exponential distribution of $h_0$, the conditional coverage probability in the best connected case can be represented as:
\begin{eqnarray} \label{cov1}
\mathcal{C}_c(r_0) = e^{- \frac{ T \sigma^2 r_0^\eta}{P}}\mathscr{L}_{I_r}\left(\frac{T r_0^\eta}{P}\right)
\end{eqnarray}
where $I_r$ is the interference from BSs located outside $r_0$. Similarly, the conditional coverage probability in the blackout case can be represented as:
\begin{eqnarray}  \label{cov2}
\mathcal{C}_{bk}(r_0)= e^{- \frac{ T \sigma^2 r_0^\eta}{P}} \mathscr{L}_{I_1}\left(\frac{T r_0^\eta}{P}\right) \mathscr{L}_{I_r}\left(\frac{T r_0^\eta}{P}\right)
\end{eqnarray}
where $I_1$ is the interference from the skipped BS and $I_r$ is the interference from BSs located outside $r_0$. The conditional (i.e., conditioning on $r_0$) Laplace transforms (LTs) of $I_r$ and $I_1$ are characterized via the following Lemma
\begin{lem} \label{lem:LT}
The Laplace transform of $I_r$ for the best connected and blackout cases is given by:
\begin{eqnarray} \label{IR1}
\mathscr{L}_{I_r}\left(\frac{Tr_{0}^{\eta}}{P}\right) &=& \exp(- \pi\lambda r_{0}^{2}\vartheta(T\text{,}\eta))
\end{eqnarray}
where
\begin{eqnarray}
\vartheta(T\text{,}\eta) &=& T^{2/\eta}\int_{T^{-2/\eta}}^{\infty}\frac{1}{1+w^{\eta/2}}dw\notag
\end{eqnarray}
The Laplace transform of $I_1$ in the blackout case is given by:
\begin{eqnarray} \label{I11}
\mathscr{L}_{I_1}\left(\frac{Tr_{0}^{\eta}}{P}\right)&=& \int_{0}^{r_0}\frac{1}{1+Tr_{0}^{\eta }r^{-\eta}}\frac{2r}{r_{0}^{2}} dr
\end{eqnarray}
\end{lem}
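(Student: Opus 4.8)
The plan is to compute both Laplace transforms using the standard probability generating functional (PGFL) of the PPP, together with the conditioning information about where interfering BSs may lie. Recall that for each case $I_r$ is the aggregate interference from BSs outside the disk of radius $r_0$ centered at the test user, each contributing power $P h_i r_i^{-\eta}$ with $h_i$ i.i.d.\ exponential (mean $1$), independent of $\Phi$.

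First I would handle $\mathscr{L}_{I_r}$. Writing $s = Tr_0^\eta/P$, I would start from $\mathscr{L}_{I_r}(s) = \mathbb{E}\big[\exp(-s\sum_{i} P h_i r_i^{-\eta})\big]$, take the expectation over the $h_i$'s first (which replaces each term $e^{-sPh_i r_i^{-\eta}}$ by $\frac{1}{1+sPr_i^{-\eta}}$), and then apply the PGFL of the PPP over the exterior region $\{x : \|x\| > r_0\}$. In polar coordinates this gives $\exp\!\big(-2\pi\lambda \int_{r_0}^\infty \big(1 - \frac{1}{1+sPr^{-\eta}}\big) r\,dr\big)$. Substituting $s = Tr_0^\eta/P$ to cancel the $P$'s, and then the change of variables $w = (r/r_0)^2 T^{-2/\eta}$ (equivalently $r = r_0 T^{1/\eta} w^{1/2}$), I expect the integral to collapse to $\pi\lambda r_0^2 \,T^{2/\eta}\int_{T^{-2/\eta}}^\infty \frac{dw}{1+w^{\eta/2}}$, which is exactly $\pi\lambda r_0^2 \vartheta(T,\eta)$. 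This part is essentially the classical coverage-probability computation and should go through routinely; the same expression holds verbatim in the blackout case because $I_r$ is defined identically there (interference from all BSs strictly outside $r_0$, the skipped BS being treated separately as $I_1$).

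Next I would handle $\mathscr{L}_{I_1}$, the interference from the single skipped BS $b_1$, conditioned on $r_0$. Since $I_1 = P h_1 r_1^{-\eta}$ with $h_1$ exponential, averaging over $h_1$ first gives $\mathscr{L}_{I_1}(s) = \mathbb{E}_{r_1}\big[\frac{1}{1+sPr_1^{-\eta}}\big]$, where the expectation is over the conditional law of $r_1$ given $r_0$. By Corollary~\ref{col_dist}, that conditional density is $f^{(bk)}_{r_1}(r\mid r_0) = 2r/r_0^2$ on $[0,r_0]$. Substituting $s = Tr_0^\eta/P$ again cancels the power $P$ and yields
\begin{eqnarray*}
\mathscr{L}_{I_1}\!\left(\frac{Tr_0^\eta}{P}\right) = \int_0^{r_0} \frac{1}{1 + Tr_0^\eta r^{-\eta}}\,\frac{2r}{r_0^2}\,dr,
\end{eqnarray*}
which is the claimed expression.

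The main obstacle, and the point that needs care rather than heavy calculation, is the independence/conditioning bookkeeping in the blackout case: one must argue that, given $r_0$, the skipped BS $b_1$ (the nearest BS, at distance $r_1 \le r_0$) and the point process of BSs outside $r_0$ are independent, so that the two Laplace factors in \eqref{cov2} multiply cleanly and $\mathscr{L}_{I_r}$ retains the same form as in the best-connected case. This follows from the independence property of the PPP across disjoint regions — conditioned on there being exactly one point in the annulus-free interior disk (the skipped BS) and on $r_0$ marking the boundary of the BS-free region, the exterior configuration is still a PPP of intensity $\lambda$ on $\{\|x\|>r_0\}$ — but it is the step most worth stating explicitly. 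The remaining manipulations (the $h_i$-averaging, the PGFL, and the two changes of variables) are routine.
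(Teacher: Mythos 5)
Your proposal is correct and follows essentially the same route as the paper's proof: averaging over the exponential fading first, applying the PGFL of the PPP over the exterior region $\{\|x\|>r_0\}$ with the change of variables $w=(v/(T^{1/\eta}r_0))^2$ for $\mathscr{L}_{I_r}$, and integrating $\frac{1}{1+sPr^{-\eta}}$ against the conditional density $2r/r_0^2$ from Corollary~\ref{col_dist} for $\mathscr{L}_{I_1}$. Your explicit remark that, conditioned on $r_0$, the exterior point configuration remains a PPP independent of the skipped BS (so the two Laplace factors in the blackout case multiply cleanly) is a justification the paper leaves implicit, and it is a worthwhile addition rather than a deviation.
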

\begin{proof}
See Appendix A.
\end{proof}

In the special case when $\eta=4$, which is a common path-loss exponent for outdoor environments, the LTs in Lemma~\ref{lem:LT} can be represented in a closed form as shown in the following corollary.
\begin{col} \label{int_col}
For the special case of $\eta=4$, the LT in \eqref{IR1} reduces to
\begin{eqnarray}
\mathscr{L}_{I_r}\left(\frac{Tr_{0}^{\eta}}{P}\right)\big|_{\eta=4}= \exp\left(- \pi\lambda r_{0}^{2} \sqrt{T}\arctan\left({\sqrt{T}}\right) \right)
\end{eqnarray}
and the LT in \eqref{I11} reduces to
\begin{align}
\mathscr{L}_{I_1}\left(\frac{Tr_{0}^{\eta}}{P}\right)|_{\eta=4} = 1-\sqrt{T} \arctan\left(\frac{1}{\sqrt{T}}\right)
\end{align}
\end{col}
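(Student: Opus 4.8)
The plan is to prove both closed forms by brute-force evaluation of the two integrals in Lemma~\ref{lem:LT} at $\eta=4$; at this exponent every integrand is a rational function whose antiderivative is an arctangent, so no new ideas are needed beyond careful bookkeeping. First I would treat $\mathscr{L}_{I_r}$. Substituting $\eta=4$ into $\vartheta(T,\eta)$ turns the prefactor $T^{2/\eta}$ into $\sqrt{T}$, the lower limit $T^{-2/\eta}$ into $1/\sqrt{T}$, and $w^{\eta/2}$ into $w^2$, so that
\[
\vartheta(T,4)=\sqrt{T}\int_{1/\sqrt{T}}^{\infty}\frac{dw}{1+w^2}=\sqrt{T}\left(\frac{\pi}{2}-\arctan\frac{1}{\sqrt{T}}\right).
\]
Applying the complementary-angle identity $\arctan x+\arctan(1/x)=\pi/2$ (valid for $x>0$) collapses the bracket to $\arctan(\sqrt{T})$, giving $\vartheta(T,4)=\sqrt{T}\arctan(\sqrt{T})$, and plugging this into \eqref{IR1} yields the stated expression for $\mathscr{L}_{I_r}\big|_{\eta=4}$.

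Next I would handle $\mathscr{L}_{I_1}$. Setting $\eta=4$ in \eqref{I11} and multiplying numerator and denominator by $r^4$ rewrites the integrand as $\frac{r^4}{r^4+Tr_0^4}\cdot\frac{2r}{r_0^2}$. The substitution $u=r^2$ (so that $2r\,dr=du$ and the limits become $0$ and $r_0^2$) converts the integral into $\frac{1}{r_0^2}\int_0^{r_0^2}\frac{u^2}{u^2+Tr_0^4}\,du$. Writing $\frac{u^2}{u^2+Tr_0^4}=1-\frac{Tr_0^4}{u^2+Tr_0^4}$ splits this into a trivial polynomial integral and the standard form $\int\frac{du}{u^2+a^2}=\frac{1}{a}\arctan\frac{u}{a}$ with $a=\sqrt{T}\,r_0^2$; evaluating at the limits makes all powers of $r_0$ cancel and leaves $1-\sqrt{T}\arctan(1/\sqrt{T})$, which is the claimed form for $\mathscr{L}_{I_1}\big|_{\eta=4}$.

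I do not anticipate any real obstacle here, since the whole corollary is a routine specialization; the only places to be careful are keeping the sign right when invoking the complementary-angle identity in the first part, and carrying the $r_0$-dependent constant $a=\sqrt{T}\,r_0^2$ correctly through the substitution in the second part so that the cancellation of the $r_0$ factors is exact. If desired, one could alternatively note that both closed forms follow from the identity $\arctan(\sqrt{T})+\arctan(1/\sqrt{T})=\pi/2$ applied to the general-$\eta$ expressions, but the direct integration above is the cleanest route.
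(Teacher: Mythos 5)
Your proposal is correct: $\vartheta(T,4)=\sqrt{T}\bigl(\tfrac{\pi}{2}-\arctan(1/\sqrt{T})\bigr)=\sqrt{T}\arctan(\sqrt{T})$ and the $u=r^2$ substitution in \eqref{I11} indeed yields $1-\sqrt{T}\arctan(1/\sqrt{T})$, which is exactly the routine specialization the paper relies on (the corollary is stated without an explicit proof). No gaps; this matches the intended derivation.
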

%By substituting laplace transforms of $I_1$ and $I_r$ in equation (5) and changing variables, we can express the coverage probability for blackout case as provided in Theorem 1.
Using \eqref{cov1}, \eqref{cov2}, and Lemma~\eqref{lem:LT} the following theorem for coverage probability is obtained.

\begin{theorem}
\label{theorem:Coverage Probability}
\begin{figure*}
\begin{eqnarray}
\mathcal{C}_c = \int_0^\infty 2 \pi \lambda x \exp\left(- \frac{ T \sigma^2 x^\eta}{P} - \pi\lambda x^{2} \left(1+ T^{2/\eta}\int_{T^{-2/\eta}}^{\infty}\frac{1}{1+w^{\eta/2}}dw\right) \right) dx
\label{f_cov1}
\\ \notag
\end{eqnarray}

\begin{eqnarray}
\mathcal{C}_{bk} = 4(\lambda\pi)^{2}\int_{0}^{\infty}y e^{-Ty^{\eta}\sigma^{2}-\lambda\pi y^2 (\vartheta(T\text{,}\eta)+1)}\bigg(\int_{0}^{y}\frac{x}{1+Ty^{\eta}x^{-\eta}}dx\bigg)dy
\label{f_cov2}
\\ \notag
\end{eqnarray}
\hrulefill
\end{figure*}
Considering a PPP cellular network with BS intensity $\lambda$ in a Rayleigh fading environment, the coverage probability for the best connected and blackout users can be expressed by \eqref{f_cov1} and \eqref{f_cov2} respectively. For $\eta=4$,
\eqref{f_cov1} and \eqref{f_cov2} reduce to
\begin{eqnarray}
\mathcal{C}_c\big|_{\eta=4}&=& \frac{1}{1+\sqrt{T}\arctan({\sqrt{T}})}\\ \notag \\
\mathcal{C}_{bk}\big|_{\eta=4}&=& \frac{1-\sqrt{T}\arctan(\frac{1}{\sqrt{T}})}{\left(1+ \sqrt{T}\arctan\left(\sqrt{T}\right) \right)^2}\end{eqnarray}
\end{theorem}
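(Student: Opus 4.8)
The plan is to obtain both general expressions by simply de-conditioning the conditional coverage probabilities \eqref{cov1} and \eqref{cov2} over the service-distance distributions of Lemma~\ref{lem:joint} and Corollary~\ref{col_dist}, after inserting the Laplace transforms of Lemma~\ref{lem:LT}; the two $\eta=4$ specializations then drop out by substituting the closed forms of Corollary~\ref{int_col} and evaluating two elementary Gaussian-type integrals. No new probabilistic machinery is needed beyond what the earlier lemmas already supply.

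For the best connected case I would write $\mathcal{C}_c=\int_0^\infty \mathcal{C}_c(r)\,f^{(c)}_{r_0}(r)\,dr$, substitute \eqref{cov1} together with the exponential form \eqref{IR1} of $\mathscr{L}_{I_r}$, and then merge the Gaussian factor $e^{-\pi\lambda r^2}$ coming from $f^{(c)}_{r_0}$ with the factor $e^{-\pi\lambda r^2\vartheta(T,\eta)}$ coming from the Laplace transform into the single exponential appearing in \eqref{f_cov1}; this reproduces \eqref{f_cov1} verbatim. For the blackout case I would instead de-condition over the \emph{joint} density \eqref{eq:joint}, i.e.\ $\mathcal{C}_{bk}=\int_0^\infty\!\!\int_0^{y}\mathcal{C}_{bk}(y)\,f^{(bk)}_{r_0,r_1}(y,x)\,dx\,dy$, noting that the inner $x$-integral against $f^{(bk)}_{r_0,r_1}$ is exactly the averaging over the skipped-BS distance that produces $\mathscr{L}_{I_1}$ in \eqref{I11}. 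Substituting \eqref{cov2} and \eqref{IR1}, cancelling the $r_0^2$ in the denominator of \eqref{I11} against the $x$-weight in the joint density, and again collecting the two Gaussian exponentials gives \eqref{f_cov2}.

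For $\eta=4$ I would pass to the interference-limited regime (consistently with the displayed closed forms) and insert Corollary~\ref{int_col}. In the best connected case the integrand becomes $2\pi\lambda r\,e^{-\pi\lambda r^2(1+\sqrt{T}\arctan\sqrt{T})}$, and $\int_0^\infty 2\pi\lambda r\,e^{-a\pi\lambda r^2}dr=1/a$ yields $\mathcal{C}_c|_{\eta=4}$. In the blackout case the crucial simplification is that $\mathscr{L}_{I_1}|_{\eta=4}=1-\sqrt{T}\arctan(1/\sqrt{T})$ is a \emph{constant}, independent of $r_0$, so it pulls out of the integral; de-conditioning the remaining $e^{-\pi\lambda r^2\sqrt{T}\arctan\sqrt{T}}$ over the marginal blackout density \eqref{ser_dist} and using $\int_0^\infty 2(\pi\lambda)^2 r^3 e^{-a\pi\lambda r^2}dr=1/a^2$ gives the stated $\mathcal{C}_{bk}|_{\eta=4}$.

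The genuinely routine content is just those two moment integrals. The one step deserving care has already been absorbed into the earlier results: the product form \eqref{cov2} presumes that, conditioned on the serving distance $r_0$, the interference $I_1$ from the lone skipped BS inside the ball of radius $r_0$ is independent of the interference $I_r$ from the PPP outside that ball, and that the skipped BS is uniformly located in that ball (giving $f^{(bk)}_{r_1}(r|r_0)=2r/r_0^2$); both rest on the conditional-PPP structure of Lemma~\ref{lem:joint}/Corollary~\ref{col_dist} and on the independence of the Rayleigh fading gains. Provided those facts are invoked correctly and the $r_0$--$r_1$ correlation is kept encoded through the joint density rather than treated as a product, I do not anticipate any further obstacle.
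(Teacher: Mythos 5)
Your proposal is correct and follows essentially the same route as the paper: substitute the Laplace transforms of Lemma~\ref{lem:LT} (and Corollary~\ref{int_col} for $\eta=4$) into the conditional coverage probabilities \eqref{cov1}--\eqref{cov2}, then average over the service-distance densities of Lemma~\ref{lem:joint} and Corollary~\ref{col_dist}, with the $\eta=4$ closed forms following from the two elementary moment integrals you state (and, as you correctly note, with the noise term dropped).
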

\begin{proof}
We prove the theorem by substituting the LTs from Lemma~\ref{lem:LT} and Corollary~\ref{int_col} in the conditional coverage probabilities in \eqref{cov1} and \eqref{cov2}, then integrating over the PDF of the service distances given in Lemma~\ref{lem:joint} and Corollary~\ref{col_dist}.
\end{proof}

In the blackout case, the interference from the skipped BS (i.e., $I_1$) may be overwhelming to the SINR. Hence, interference cancellation techniques could be employed to improve the coverage probability. In this case, the interfering signal from the skipped BS is detected, demodulated, decoded and then subtracted from the received signal \cite{16a}. In this case, the coverage probability for blackout user is given by the following theorem.
\begin{theorem}
\label{theorem:Coverage Probability_IC}
Considering a PPP cellular network with BS intensity $\lambda$ in a Rayleigh fading environment, the coverage probability for blackout users with interference cancellation capabilities can be expressed as
\begin{eqnarray}
\mathcal{C}^{(IC)}_{bk}=\frac{1}{(1+\vartheta(T\text{,}\eta))^2}
\label{eq:Coverage Probability for Skipping Case with IC}
\end{eqnarray}
where $\vartheta(T\text{,}\eta)$ is defined in Lemma~\ref{lem:LT}. For the case of $\eta = 4$, \eqref{eq:Coverage Probability for Skipping Case with IC} reduces to
\begin{eqnarray}
\mathcal{C}^{(IC)}_{bk} \big|_{\eta=4}&=& \frac{1}{(1+\sqrt{T}\arctan(\sqrt{T}))^2}
\end{eqnarray}
\end{theorem}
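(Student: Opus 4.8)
The plan is to start from the conditional coverage probability for the blackout case with interference cancellation. Since the skipped BS's interference is removed, the only change relative to \eqref{cov2} is that the factor $\mathscr{L}_{I_1}(Tr_0^\eta/P)$ disappears, leaving the conditional coverage probability as $\mathcal{C}^{(IC)}_{bk}(r_0) = e^{-T\sigma^2 r_0^\eta/P}\,\mathscr{L}_{I_r}(Tr_0^\eta/P)$, with $\mathscr{L}_{I_r}$ given by \eqref{IR1} in Lemma~\ref{lem:LT}. The key structural observation is that this conditional expression has exactly the same functional form as the best-connected conditional coverage probability in \eqref{cov1}; only the service-distance distribution differs, since a blackout user is served by its \emph{second} nearest BS, whose PDF is $f^{(bk)}_{r_0}(r) = 2(\lambda\pi)^2 r^3 e^{-\lambda\pi r^2}$ from Corollary~\ref{col_dist}.

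Next I would integrate $\mathcal{C}^{(IC)}_{bk}(r_0)$ against $f^{(bk)}_{r_0}(\cdot)$. Substituting \eqref{IR1} and ignoring noise (or keeping it and noting the standard interference-limited simplification), the integral becomes
\begin{eqnarray*}
\mathcal{C}^{(IC)}_{bk} = \int_0^\infty 2(\lambda\pi)^2 r^3 \exp\!\left(-\pi\lambda r^2\big(1+\vartheta(T,\eta)\big)\right) dr,
\end{eqnarray*}
which is a Gamma-type integral of the form $\int_0^\infty r^3 e^{-a r^2}\,dr = 1/(2a^2)$ with $a = \pi\lambda(1+\vartheta(T,\eta))$. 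Carrying this out gives $\mathcal{C}^{(IC)}_{bk} = 2(\lambda\pi)^2 \cdot \tfrac{1}{2\pi^2\lambda^2(1+\vartheta(T,\eta))^2} = 1/(1+\vartheta(T,\eta))^2$, establishing \eqref{eq:Coverage Probability for Skipping Case with IC}. The $\eta=4$ specialization then follows immediately by substituting the closed form $\vartheta(T,4) = \sqrt{T}\arctan(\sqrt{T})$ from Corollary~\ref{int_col}.

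I do not anticipate a serious obstacle here; the result is essentially a corollary of Lemma~\ref{lem:LT} and Corollary~\ref{col_dist} once one recognizes that interference cancellation collapses the blackout expression to a best-connected-type expression evaluated with the second-nearest-neighbor distance law. The one point requiring a little care is the treatment of thermal noise: the clean closed form \eqref{eq:Coverage Probability for Skipping Case with IC} holds in the interference-limited regime $\sigma^2 \to 0$, so I would either state that assumption explicitly or leave the $e^{-T\sigma^2 r^\eta/P}$ factor inside the integral and remark that the displayed form is its interference-limited limit, consistent with how \eqref{f_cov1}–\eqref{f_cov2} were handled. The only genuinely mechanical step is the Gamma integral, which I would not belabor.
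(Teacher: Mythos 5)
Your proposal is correct and follows essentially the same route as the paper: eliminate the $\mathscr{L}_{I_1}$ factor from the blackout conditional coverage in \eqref{cov2}, substitute $\mathscr{L}_{I_r}$ from Lemma~\ref{lem:LT}, and integrate against the second-nearest-BS distance PDF of Corollary~\ref{col_dist}, which yields the Gamma-type integral giving $1/(1+\vartheta(T,\eta))^2$. Your explicit remark that the closed form holds in the interference-limited regime ($\sigma^2 \to 0$) is a point the paper leaves implicit, but it does not change the argument.
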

\begin{proof}
The theorem is obtained using the same methodology for obtaining (14) but with eliminating $I_1$ from \eqref{cov2}.
\end{proof}

The coverage probability plots for best connected and blackout cases with and without nearest BS interference cancellation (IC) are shown in Fig 3. It can be observed that the simulation results (which are obtained for mobile users) are in accordance with the analysis which validates our model. The figure shows the cost of skipping in terms of coverage probability degradation. Note that the users in the BS skipping scheme alternate between the blackout and best connected cases as we allow for only one BS skipping in this paper. Hence, a user in the skipping model would spent $50\%$ of the time with the blackout coverage and $50\%$ of the time in best connected coverage. The figure also shows that interference cancellation highly improves the SINR when compared to the blackout case without interference cancellation. Note that, although the expressions in Theorems 1 and 2 are obtained using stationary PPP analysis, they totally conform with the simulations done with mobile users almost exactly, which justifies the aforementioned claim in Section I.
In the next section, we derive the effect of BS skipping on the user rate.

\begin{figure}[!t]
\centering
\includegraphics[width=0.95 \linewidth]{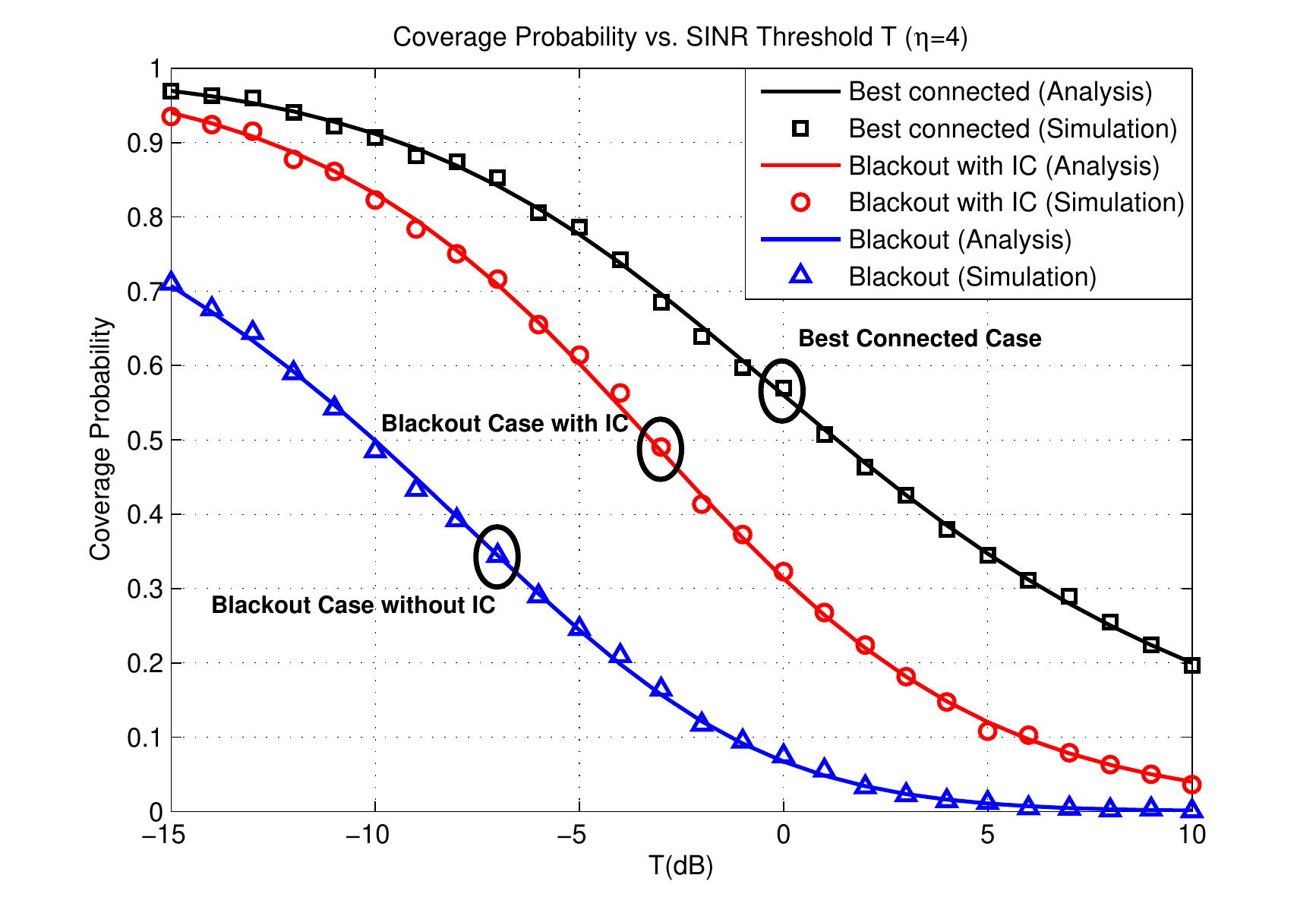}
\small \caption{Coverage probability plots for conventional and skipping cases at $\eta=4$.}
\end{figure}

\section{Handover Cost}
In this section, we encompass user mobility effect and compute handover rates for both conventional and skipping schemes. The handover rates are then used to quantify the handover delay per unit time $D_{HO}$ (i.e., time consumed in HO per time unit). $D_{HO}$ can be expressed as a function of HO rate (HOR) and HO delay $d$ as shown below
\begin{eqnarray}
D_{HO}&=& \text{HOR} * d
\end{eqnarray}
Following \cite{10a}, the HO rate in a single tier network can be expressed as
\begin{eqnarray}
H(v)&=& \frac{4v}{\pi}\sqrt{\lambda}
\end{eqnarray}
Consequently, the handover delay $D_{HO}$ for both conventional and skipping cases can be expressed as
\begin{eqnarray}
D_{HO}^{(c)}&=&\frac{4v}{\pi}\sqrt{\lambda}d\\
D_{HO}^{(sk)}&=&\frac{2v}{\pi}\sqrt{\lambda}d
\end{eqnarray}
where $D_{HO}^{(c)}$ and $D_{HO}^{(sk)}$ are the HO cost for conventional and skipping cases respectively. Note that the handover cost for the skipping case is half of the conventional case because the user skip half of the handovers across the trajectory. Assuming HO delay of 0.7 seconds for macro BSs and 2 seconds for IP-backhauled small cells \cite{14a}, we plot the handover cost in Fig.~\ref{ho_cost}.
\begin{figure}[t]
\centering
\includegraphics[width=1 \linewidth]{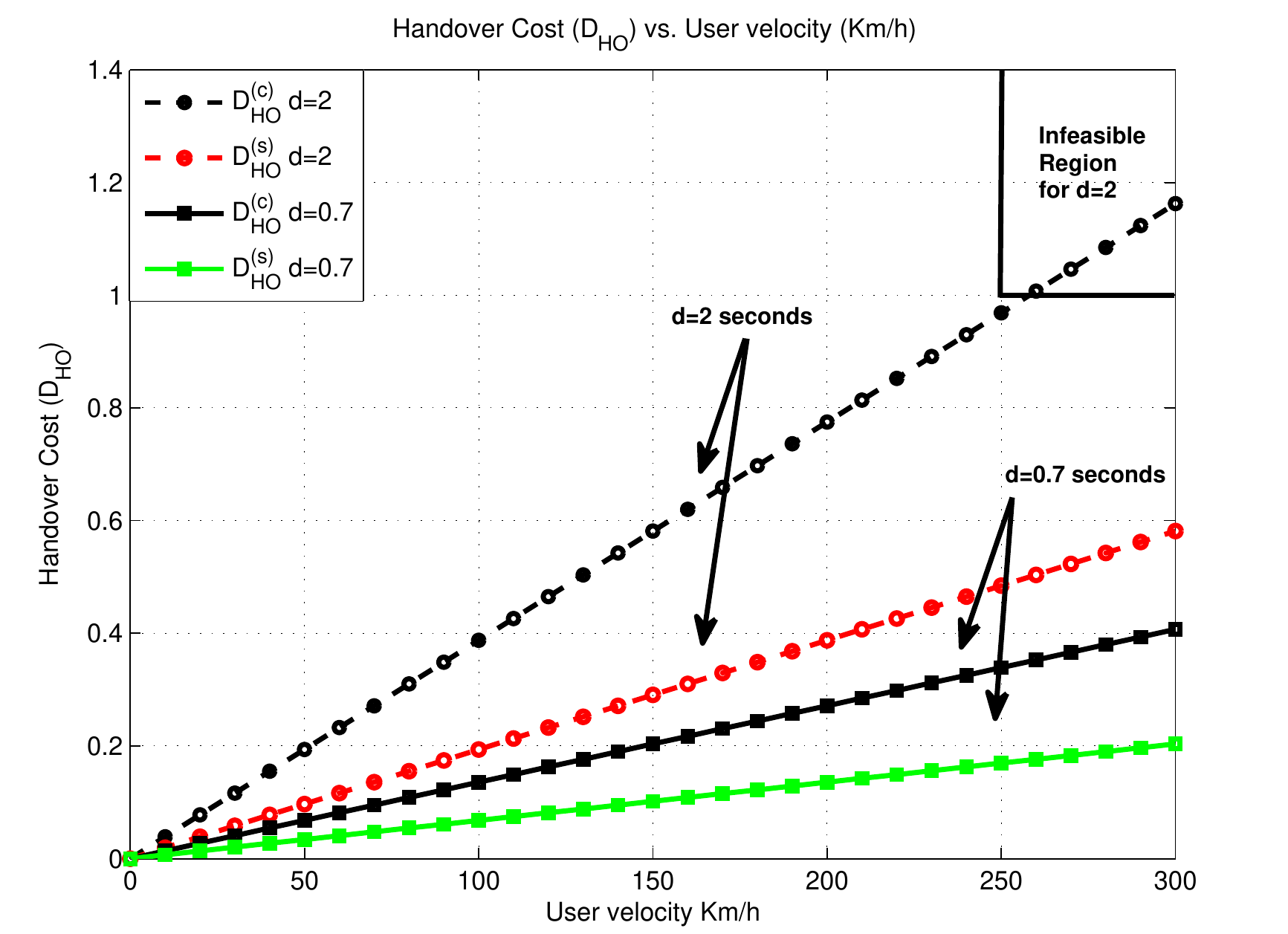}
\small \caption{$D_{HO}$ plots for conventional and HO skipping cases vs. user velocity (Kmph) for $\lambda=30 BS/km^2$.}
%\small Fig. 1: Voronoi Tessellation of actual cellular network in urban area with $\lambda$=0.8 per km\textsuperscript{2}
\label{ho_cost}
\end{figure}

\section{Average Throughput}
In this section, we derive an expression for average throughput for HO skipping case. In order to calculate the throughput, we need to omit control overhead. We assume that the control overhead consumes a fraction $u_c$ of overall network capacity which is 0.3 as per 3GPP Release 11 \cite{17a}. Thus, the average throughput (AT) can be expressed as
\begin{eqnarray}
\text{AT} &=& W\mathcal{R}(1-u_c)(1-D_{HO})
\end{eqnarray}
where $W$ is the overall bandwidth of the channel and $\mathcal{R}$ is the average spectral efficiency (i.e., nats/sec/Hz). Following \cite{6a}, the average spectral efficiency can expressed in terms of the coverage probability as
\begin{eqnarray} \label{rates}
\mathcal{R}&\stackrel{(a)}{=}& \int_{0}^{\infty}\mathbb{P}\left\{\ln(1+SINR)>z\right\}dz\\
&\stackrel{(b)}{=}&\int_{0}^{\infty}\frac{\mathbb{P}\left\{SINR>t\right\}}{t+1}dt
\end{eqnarray}
where (a) follows because $\ln(1+SINR)$ is a positive random variable and (b) follows by the change of variables $t=e^{z}-1$ \cite{18a}. For brevity, we do not show expressions for $\mathcal{R}$ for general $\eta$ as they can be directly obtained by substituting the coverage probability from Theorem 1 and Theorem 2 in \eqref{rates}. In the special case of $\eta=4$, the average spectral efficiency for the conventional and blackout cases are given by
\begin{eqnarray}
\mathcal{R}_c&=&\int_{0}^{\infty}\frac{1}{(1+t)(1-\sqrt{t}\arctan(\sqrt{t}))}dt
\end{eqnarray}
\begin{eqnarray}
\simeq 1.49\hspace{0.2cm}\text{nats/sec/Hz} \notag
\end{eqnarray}
and
\begin{eqnarray}
\mathcal{R}_{bk} &=& \int_{0}^{\infty}\frac{1-\sqrt{t}\arctan(\frac{1}{\sqrt{t}})}{(1+t)(1+\vartheta(t\text{,}4))^2}dt\\ \notag\\ \notag
&\stackrel{(c)}{\simeq}& 0.21\hspace{0.2cm} \text{nats/sec/Hz} \notag\\
&\stackrel{(d)}{\simeq}& 0.66\hspace{0.2cm} \text{nats/sec/Hz} \notag
\end{eqnarray}
where (c) and (d) are for blackout cases without and with interference cancellation, respectively.

In HO skipping case, the user alternate between the best connection and blackout case along its trajectory. More particularly, the user in HO skipping spent $50 \%$ of the time as blackout user and $50 \%$ of the time as best connected user. Hence, the average spectral efficiency for users in HO skipping is given by
\begin{eqnarray}
\mathcal{R}_{s}&=& \frac{\mathcal{R}_{c}+\mathcal{R}_{bk}}{2}\\
&\simeq&1.07\hspace{0.2cm}\text{nats/sec/Hz} \notag
\end{eqnarray}

\section{Numerical Results}
In this section, we use the developed analytical model to evaluate the performance of HO skipping in terms of users throughput. We use the following parameters to conduct our analysis. Transmission powers of all BSs are considered to be unity. Channel bandwidth ($W$) is considered to be $10 MHz$. Control overhead is assumed $0.3$ for conventional case and $0.15$ for skipping case. Analysis is conducted for $d=$ 0.7 \& 2 seconds. Path loss exponent $\eta=4$ is considered. Different values of $\lambda$ are considered here to mark the nominal speed at which HO skipping is effective. Moreover, nearest BS interference cancellation technique is considered.
\begin{figure}[!t]
\centering
\includegraphics[width=1\linewidth]{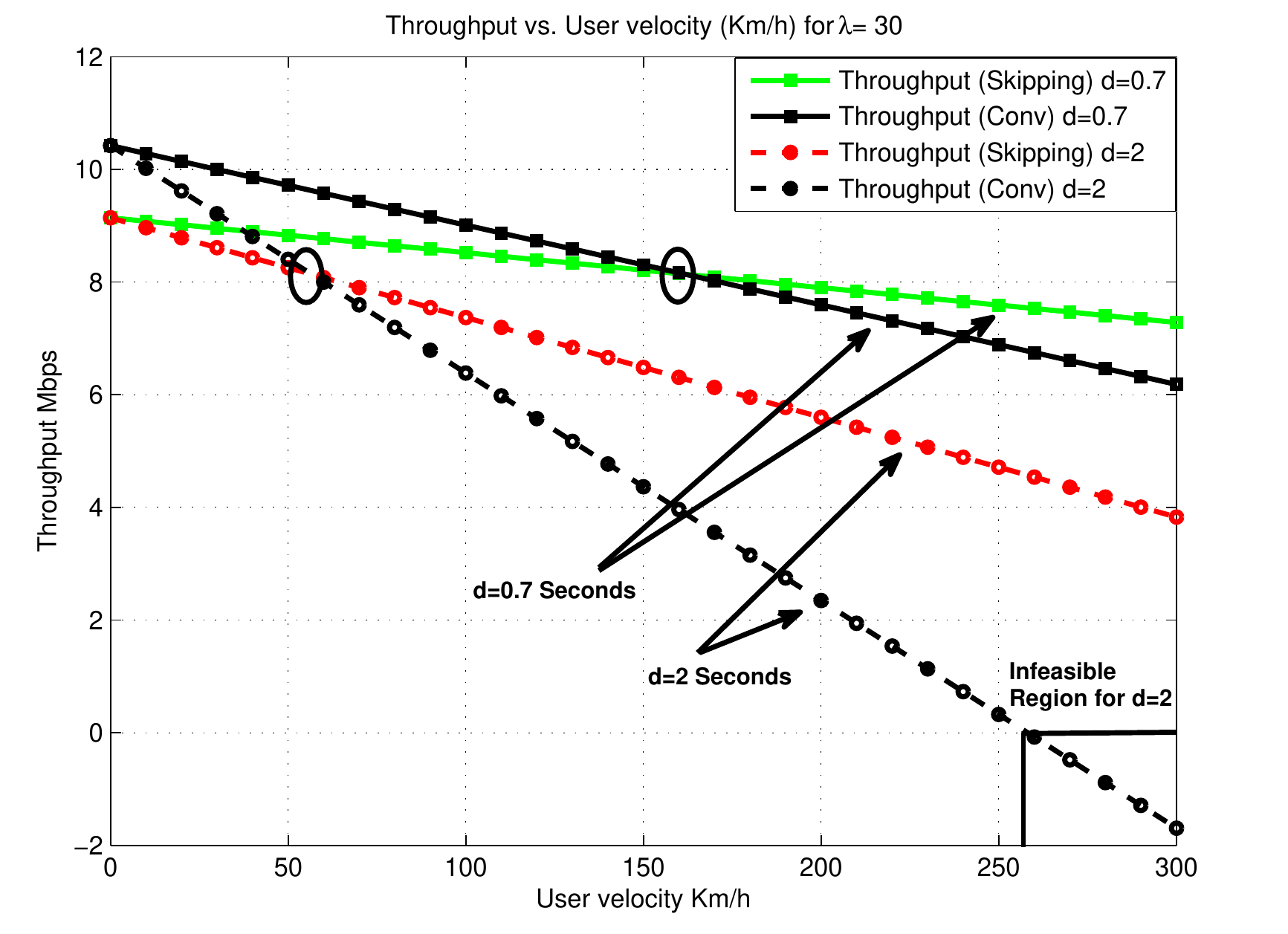}
\small \caption{Application Throughput (Mbps) vs. user velocity (Kmph) for $\lambda=30$ and $W=10MHz$.}
%\small Fig. 1: Voronoi Tessellation of actual cellular network in urban area with $\lambda$=0.8 per km\textsuperscript{2}
\end{figure}
\begin{figure}[!t]
\centering
\includegraphics[width=1 \linewidth]{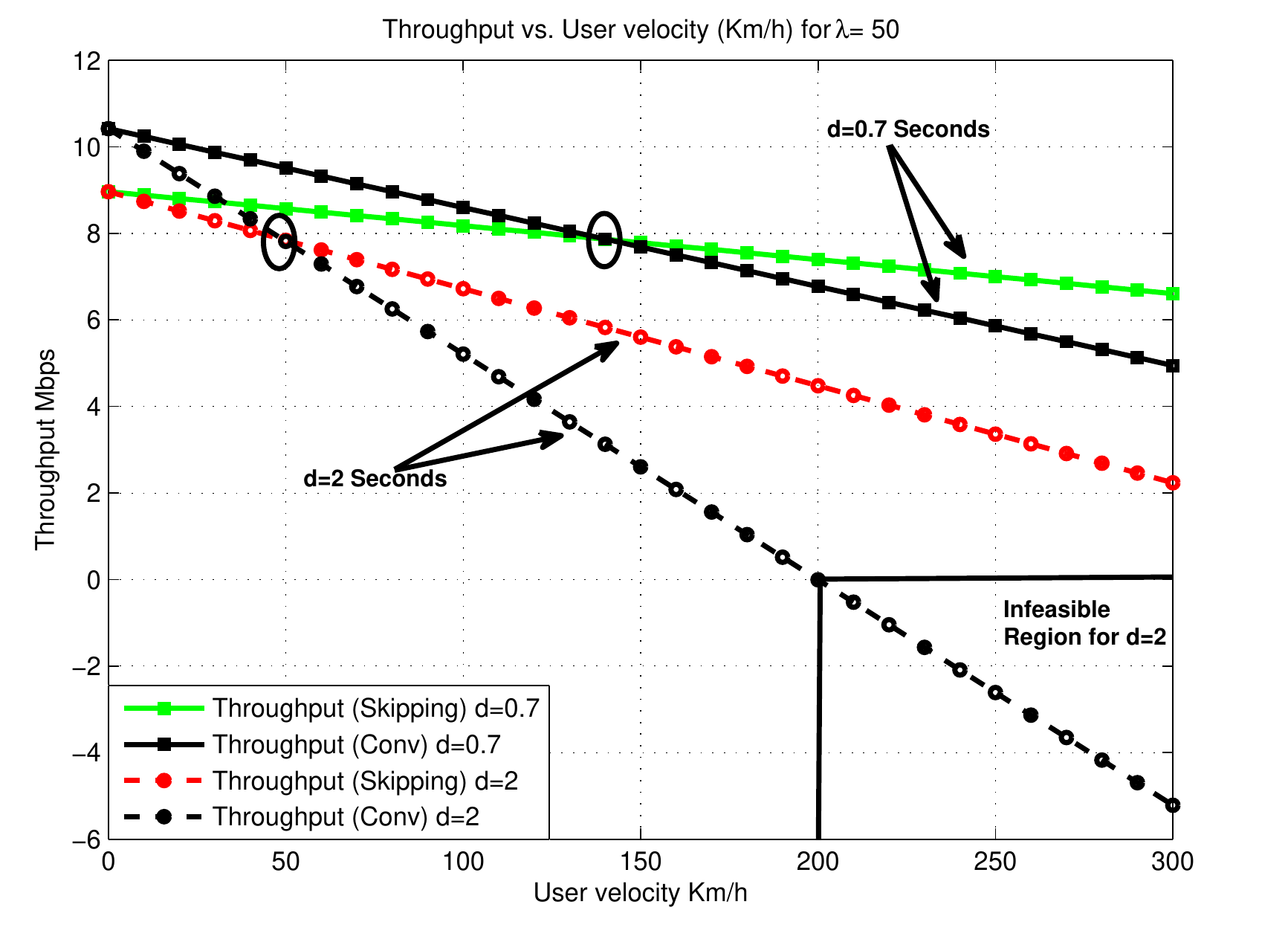}
\small \caption{Application Throughput (Mbps) vs. user velocity (Kmph) for $\lambda=50$ and $W=10MHz$.}
%\small Fig. 1: Voronoi Tessellation of actual cellular network in urban area with $\lambda$=0.8 per km\textsuperscript{2}
\end{figure}
Performance gain with HO skipping depends on BS intensity as depicted in Fig. 5-7. Note that the improved performance of the HO skipping stems from the increased HO delay with the BS intensity. From the statistics shown in Fig. 5-7, for HO skipping case, it can be observed that the user throughput experience tends to improve with the increase in user velocity. For instance, when $\lambda= 70 BS/Km^2$, the skipping HO scheme outperforms the conventional HO scheme once the user speed exceeds 40 Kmph and 110 Kmph for $d= 2$ and $0.7$ seconds respectively.

\section{Conclusion}
This paper presents a study for the effect of HO delays on the user average rate in single-tier dense cellular networks using tools from stochastic geometry. A single HO skipping scheme is proposed to reduce the negative effect of this delay. Tractable mathematical expressions for the coverage probabilities and average throughput for both the conventional HO and HO skipping scenarios are derived, which reduce to closed-forms in some special cases. Our mathematical paradigm and numerical results demonstrate the merits of the proposed HO skipping strategy in many practical scenarios. This new HO strategy enhances the Quality of Service (QOS) for service providers and helps them to offer a better service experience to both voice and data subscribers.

For future work, we will extend the study and suggest HO skipping solutions to multi-tier cellular networks. We will also evaluate the performance of multiple HO skipping and optimize number of skipped BSs to maximize the average rate performance.
\begin{figure}[!t]
\centering
\includegraphics[width=1 \linewidth]{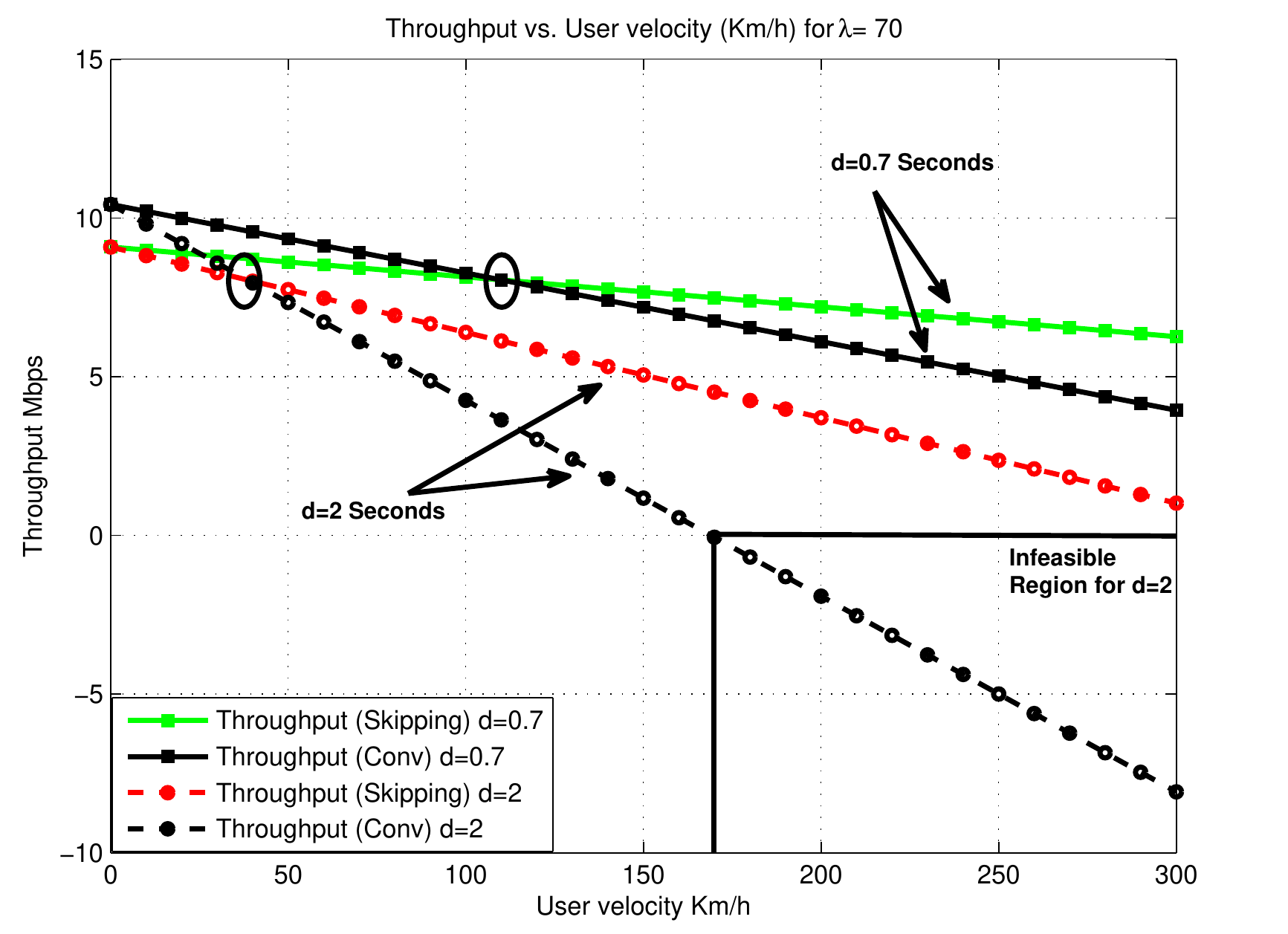}
\small \caption{Application Throughput (Mbps) vs. user velocity (Kmph) for $\lambda=70$ and $W=10MHz$.}
%\small Fig. 1: Voronoi Tessellation of actual cellular network in urban area with $\lambda$=0.8 per km\textsuperscript{2}
\end{figure}

\appendices
\ignore{
\section{Proof of Lemma 1}
The Laplace transform of $I_1$ can be expressed as
\begin{eqnarray}
\mathscr{L}_{I_1}(s) &=& \mathrm{E}\lbrack e^{-sI_1}\rbrack\notag\\
&=& \mathrm{E}\lbrack e^{-shr_{1}^{-\eta}}\rbrack\notag\\
&=& \mathrm{E}_{r_1}\lbrack \mathrm{E}_h \lbrack e^{-shr_{1}^{-\eta}}\rbrack\rbrack\notag \\
&=& \mathrm{E}_{r_1}\lbrack \mathscr{L}_h(sr_{1}^{-\eta})\rbrack\notag\\
\text{since h $\sim$ exp(1)}\notag\\
\mathscr{L}_{I_1}(s) &=& \mathrm{E}_{r_1}\lbrack\frac{1}{1+sr^{-\eta}}\rbrack\notag\\
&=& \int_{0}^{r_0}\frac{1}{1+sr^{-\eta}} f_{r_1}dr\notag\\
&=& \int_{0}^{r_0}\frac{1}{1+Tr_{0}^{\eta }r^{-\eta}}\frac{2r}{r_{0}^{2}} dr
\end{eqnarray}
}

\section{Proof of Lemma 2}
The Laplace transform of $I_r$ is given by
\begin{eqnarray}
\mathscr{L}_{I_r}(s) &=& \mathrm{E}\big\{ e^{-sI_r}\big\}\notag\\
&=&\mathrm{E}\bigg\{ e^{-s\sum_{i\epsilon \phi \backslash b_0 \text{,}b_1}{}P_{i}h_iR_{i}^{-\eta}}\bigg\}\notag
\end{eqnarray}
Due to the independence between fading and BSs location, we get:
\begin{eqnarray}
\mathscr{L}_{I_r}(s)&=&\mathrm{E}_{\phi} \bigg\{ \prod_{i\epsilon \phi \backslash b_0 \text{,}b_1}^{} \mathrm{E}_{h_i}\big\{e^{-sP_{i}h_{i} R_{i}^{-\eta}}\big\}\bigg\}\notag\\
&=&\mathrm{E}_{\phi}\bigg\{\prod_{i\epsilon \phi \backslash b_0 \text{,}b_1}^{} \mathscr{L}_{h_i}(sP_{i}R_{i}^{-\eta})\bigg\}\notag
\end{eqnarray}
But since $h_{i}$ $\sim$ exp(1), we can write
\begin{eqnarray}
\mathscr{L}_{I_r}(s)&=&\mathrm{E}_{\phi}\bigg\{\prod_{i\epsilon \phi \backslash b_0 \text{,}b_1}^{}\frac{1}{1+sP_{i}R_{i}^{-\eta}}\bigg\}\notag
\end{eqnarray}
Assuming $P_i=P$, $\forall i$ and using probability generating functional (PGFL) for PPP \cite{20a}, we get:
\begin{eqnarray}
\mathscr{L}_{I_r}(s)&=& \exp\bigg(-2\pi\lambda\int_{r_0}^{\infty}(1-\frac{1}{1+sPv^{-\eta}})vdv\bigg)\notag\\
&=& \exp\bigg(-2\pi\lambda\int_{r_0}^{\infty}\frac{1}{\frac{1}{Tr_{0}^{\eta}v^{-\eta}}+1}vdv\bigg)\notag\\ \notag
\end{eqnarray}
By change of variables $w=\bigg(\frac{v}{T^{1/ \eta}r_0}\bigg)^2$ and substituting $s=\frac{Tr_{0}^{\eta}}{P}$, we get
\begin{eqnarray}
\mathscr{L}_{I_r}\left(\frac{Tr_{0}^{\eta}}{P}\right)
&=& \exp\left(-\pi \lambda r_{0}^{2} T^{2/\eta}\int_{T^{-2/\eta}}^{\infty}\frac{1}{1+w^{\eta/2}}dw\right)\notag\\ \notag\\
&=& \exp\left(- \pi\lambda r_{0}^{2}\vartheta(T\text{,}\eta)\right)
\end{eqnarray}
where
\begin{eqnarray}
\vartheta(T\text{,}\eta) &=& T^{2/\eta}\int_{T^{-2/\eta}}^{\infty}\frac{1}{1+w^{\eta/2}}dw\notag
\end{eqnarray}
Laplace transform of $I_1$ can be expressed as
\begin{eqnarray}
\mathscr{L}_{I_1}(s) &=& \mathrm{E}\big\{ e^{-sI_1}\big\}\notag\\
&=& \mathrm{E}\big\{ e^{-sPhr_{1}^{-\eta}}\big\}\notag\\
&=& \mathrm{E}_{r_1}\big\{ \mathrm{E}_h \lbrack e^{-sPhr_{1}^{-\eta}}\rbrack\big\}\notag \\
&=& \mathrm{E}_{r_1}\big\{ \mathscr{L}_h(sPr_{1}^{-\eta})\big\}\notag
\end{eqnarray}
But since h $\sim$ exp(1), we get:
\begin{eqnarray}
\mathscr{L}_{I_1}(s) &=& \mathrm{E}_{r_1}\big\{\frac{1}{1+sPr^{-\eta}}\big\}\notag\\
&=& \int_{0}^{r_0}\frac{1}{1+sPr^{-\eta}} f_{r_1}(r)dr\notag
\end{eqnarray}
Using conditional distribution of $r_1$ obtained in corollary 1, we get:
\begin{eqnarray}
\mathscr{L}_{I_1}\bigg(\frac{Tr_{0}^{\eta}}{P}\bigg)&=& \int_{0}^{r_0}\frac{1}{1+Tr_{0}^{\eta }r^{-\eta}}\frac{2r}{r_{0}^{2}} dr\\ \notag
\end{eqnarray}

\bibliographystyle{IEEEtran}
\bibliography{IEEEabrv,mybibr}
\vfill

\end{document}